\theoremstyle{definition}
\theoremstyle{theorem}
\newtheorem{lemma}{Lemma}
\newtheorem*{lemma*}{Lemma}
\newtheorem{proposition}{Proposition}
\newtheorem*{proposition*}{Proposition}
\newtheorem*{conjecture*}{Conjecture}
\newtheorem*{corollary*}{Corollary}
\theoremstyle{remark}
\newcommand{\U}{\mathscr{U}}
\renewcommand{\H}{\mathscr{H}}
\renewcommand{\P}{\mathscr{P}}
\renewcommand{\S}{\mathscr{S}}
\newcommand{\M}{\mathscr{M}}
\newcommand{\II}{\mathbb{I}}
\newcommand{\Hv}{\operatorname{H}}
\newcommand{\Id}{\operatorname{I}}
\newcommand{\n}{\pmb{n}}
\renewcommand{\a}{\pmb{a}}
\renewcommand{\b}{\pmb{b}}
\renewcommand{\sup}{\operatorname{sup}}
\newcommand{\abs}[1]{\left| #1\right|}
\newcommand{\norm}[1]{\left\| #1 \right\|}
\renewcommand{\d}{\operatorname{d}}
\newcommand{\Tr}{\operatorname{Tr}}
\newcommand{\cell}{\operatorname{box}}
\begin{document}
\title{Necessary and sufficient condition for steerability of two-qubit states by the geometry of steering outcomes}
\author{H. Chau Nguyen}
\email{chau@pks.mpg.de}
\affiliation{Max-Planck-Institut f\"ur Physik komplexer Systeme, \\ N\"othnitzer Stra{\ss}e 38, D-01187 Dresden, Germany}
\author{Thanh Vu}
\email{tvu@unl.edu}
\affiliation{Department of Mathematics, University of Nebraska-Lincoln, \\Lincoln, NE 68588, USA}
\begin{abstract}
Fully characterizing the steerability of a quantum state of a bipartite system has remained an open problem ever since the concept of steerability was first defined. In this paper, using our recent geometrical approach to steerability, we suggest a necessary and sufficient condition for a two-qubit state to be steerable with respect to projective measurements. To this end, we define the critical radius of local models and show that a state of two qubits is steerable with respect to projective measurements from Alice's side if and only if her critical radius of local models is less than $1$.  As an example, we calculate the critical radius of local models for the so-called T-states by proving the optimality of a recently-suggested ansatz for Alice's local hidden state model.
\end{abstract}

\maketitle

\section{Introduction}
Quantum steering, or Einstein-Podolsky-Rosen steering~\cite{Einstein1935a} is a fascinating phenomenon of quantum physics, yet has remained a topic of debate throughout the history of quantum mechanics~\cite{Bell1993a}. It indicates the ability of Alice to steer Bob's system into different ensembles from a distance. 
Though this description of quantum steering was known since the beginning of the debate~\cite{Einstein1935a,Schrodinger1935a}, it was not appreciated that quantum steering implies a precise classification of bipartite quantum states into steerable and unsteerable classes until recently~\cite{Wiseman2007a,Jones2007b}. Quantum steerability has since formed a third distinct class of quantum non-locality, in addition to two other well-known classes, non-separability~\cite{Werner1989a} and Bell non-locality~\cite{Bell1964a}.

Ever since steerability was defined, much effort has been spent on characterizing the steerable class of quantum states. Steering inequalities were developed as sufficient conditions for steerability~\cite{Cavalcanti2009a, Saunders2011a,Zukowski2015a,Marciniak2015a,Kogia2015a,Zhu2015a}. Recently, strategies to build local hidden state models to establish necessary conditions for steerability have also been suggested~\cite{Jevtic2015a,Bowles2016a,Cavalcanti2015a,Hirsch2015a}. The equivalences between steerability with joint-measurability in the theory of quantum measurements~\cite{Quintino2014a,Uola2014a}, and with sub-channel discrimination in quantum communication theory~\cite{Piani2015a} were also established. However, to our best knowledge, an intrinsic necessary and sufficient condition for steerability is still lacking even for the simplest case of two-qubit states and by projective measurements. The goal of this paper is to propose such a criterion in this simplest case. 

This work is a continuation of our geometrical framework established recently~\cite{Nguyen2016a}, where we simplify the problem of determining the steerability of a two-qubit state to classifying the double-cone of steering outcomes in the 4-dimensional (4D) Euclidean space. To this end, we will define Alice's critical radius of local models and show that a two-qubit state is steerable from Alice's side by projective measurements if and only if this radius is smaller than $1$; see Proposition \ref{prop:steering-radius}. Using our criterion, we prove a necessary and sufficient condition conjectured by Jevtic~\textit{et al.}~\cite{Jevtic2015a} on the steerability of the so-called T-states.

\section{Characterizing steerability geometrically}
\subsection*{EPR map and steering outcomes}
In this subsection, we recall the notion of EPR map and steering outcomes introduced in \cite{Nguyen2016a}. Consider a state of two qubits AB, that is, a positive (semi-definite) unit-trace operator $\rho$ over $\H_A \otimes \H_B$, where $\H_A$ and $\H_B$ are $2$-dimensional (2D) Hilbert spaces. The Euclidean spaces of hermitian operators acting on $\H_A$ and $\H_B$ are denoted by $B^H (\H_A)$ and $B^{H} (\H_B)$, respectively. Operators acting on $\H_A$ will be denoted by $A$, or labelled by super/sub-scripts A; and an analogous convention is applied to B. In particular, $\II^A$ and $\II^B$ are the identity operators over $\H_A$ and $\H_B$, respectively. The EPR map from Alice side $\rho^{A \to B}:B^H(\H_A) \to B^H(\H_B)$ is defined by $\rho^{A \to B} (A)=\Tr_A \left[ \rho (A \otimes \II^B)\right]$. If $X$ is a subset, or an element, of $B^H(\H_A)$, then its image under Alice's EPR map is denoted by $X'$. In particular, the set of Alice's measurement outcomes, $\M_A= \{M \in B^{H} (\H_A) \vert O \le M \le \II^A\}$ with $O$ denoting the zero operator, is mapped to the set of Alice's steering outcomes $\M_A' \subseteq B^H(\H_B)$. Moreover, let $\P_A$ be Alice's Bloch hyperplane, $\P_A= \{A  \in B^{H} (\H_A) \vert \Tr (A)=1\}$, and $\S_A$ be Alice's Bloch ball, $\S_A= \P_A \cap \M_A$. Their images under Alice's EPR map are $\P_A'$ and $\S_A'$ respectively.   

Note that $B^H(\H_A)$ and $B^H(\H_B)$ are 4D Euclidean spaces. If one chooses an orthonormal basis for $\H_A$, and uses the Pauli matrices $\{\sigma_i^{A}\}_{i=0}^{3}$ (with $\sigma_0^A= \II^A$) as the basis of $B^H(\H_A)$, any operator $A$ of $B^H(\H_A)$ can be written as $A=\frac{1}{2}\sum_{i=0}^{3} X_i(A) \sigma_i$, where $X_i(A) = \Tr (A \sigma_i^A)$. Similar orthogonal coordinates can be introduced for $B^H(\H_B)$. With these two coordinate systems, a density operator $\rho$ can then be written as $\rho= \frac{1}{4}\sum_{i,j=0}^{3} \Theta_{ij} (\rho) \sigma^A_i \otimes \sigma^B_j$, where $\Theta_{ij} (\rho)=\Tr [\rho (\sigma_i^A \otimes \sigma_j^B) ]$. Accordingly, Alice's EPR map is presented by the $(4 \times 4)$ matrix $\frac{1}{2} \Theta^T$.

\subsection*{Packability of steering outcomes and steerability}
A projective measurement performed on A is a decomposition of $\II^A$ into $P^A_1+P^A_2$, where $P^A_1$ and $P^A_2$ are two orthogonal projections. A state $\rho$ is called \emph{unsteerable} from Alice's side with respect to projective measurements if there exists a decomposition of $(\II^A)'=\Tr_A(\rho)$ into an ensemble of $m$ positive operators of $\H_B$, $(\II^A)'= \sum_{i=1}^{m} B_i$, such that for any projective measurement $\{P^A_1,P^A_2\}$ on A, the corresponding conditional ensemble of Bob's system B, $\{(P^A_i)'=\Tr_A [\rho (P^A_i \otimes \II^B)]\}_{i=1}^2$, can be expressed as a stochastic map from $\{B_i\}_{i=1}^m$ to $\{(P^A_1)',(P^A_2)'\}$, i.e.,
\begin{equation}
(P^A_i)' = \sum_{j=1}^{m} G_{ij} B_j,
\label{eq:unsteerable-def}
\end{equation}
where $G$ is a $(2 \times m)$ stochastic matrix, $0 \le G_{ij} \le 1$, $\sum_{i=1}^{2} G_{ij} = 1$. The ensemble $\{B_i\}_{i=1}^m$ is called  Alice's ensemble of \emph{local hidden states}. Alice's local hidden states, if they exist, will allow her to locally simulate quantum steering on Bob's system, and therefore quantum steering cannot be verified; see~\cite{Wiseman2007a,Jones2007b}. In this paper, we only consider steerability with respect to projective measurements, which will be called steerability for short.

In~\cite[Proposition 2]{Nguyen2016a}, we have shown that the steerability of a qubit state (from Alice's side) is completely determined by the geometry of Alice's steering outcomes $\M_A'$. Specifically, the set of Alice's steering outcomes is called packable if there exists a set of $m$ positive operators $\U_B=\{B_i\}_{i=1}^{m}$ such that $\sum_{i=1}^{m} B_i= (\II^A)'$ and $\M_A' \subseteq \cell(\U_B)$, where $\cell(\U_B) = \{\sum_{i=1}^{m} \beta_i B_i \vert 0 \le \beta_i \le 1 \}$ is called the polyhedral box based on $\U_B$. Then the state $\rho$ is unsteerable from Alice's side with respect to projective measurements if and only if the set of Alice's steering outcomes is packable. To test a necessary condition for steerability, one can simply choose an ansatz $\U_B$ and check if $\M_A' \subseteq \cell(\U_B)$. Note that such an ansatz is always required to give rise to $\cell(\U_B)$ that has the so-called principal vertex $\sum_{i=1}^{m} B_i$ at B's reduced state $(\II^A)'$. In the following, we show how an ansatz can be optimized so that the necessary condition becomes also sufficient.    
\subsection*{The critical radius of local models}
In~\cite{Nguyen2016a}, we have shown that an ansatz $\U_B$ can always be chosen to be a subset of the boundary of the positive cone of $B^H(\H_B)$, which can be represented by a distribution $u(\n)$ on Bob's Bloch sphere $\partial (\S_B)$, normalized by $\int \d \mu (\n)=1$, where $\mu$ is the measure on the Bloch's sphere generated by $u$. The problem of determining if $\M_A'$ is contained in $\cell(\U_B)$ can be made simple by the following lemma.
\begin{lemma}
\label{lem:equator-containing}
For a given state $\rho$ and a given ansatz $\U_B$ then $\M_A' \subseteq \cell (\U_B)$ if and only if $\S_A' \subseteq \cell (\U_B)$.
\end{lemma}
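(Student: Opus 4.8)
The plan is to reduce the containment of the full outcome set to that of the Bloch ball by combining the linearity of the EPR map with the double-cone geometry of $\M_A$. One direction is immediate: since $\S_A = \P_A \cap \M_A \subseteq \M_A$, applying Alice's EPR map gives $\S_A' \subseteq \M_A'$, so $\M_A' \subseteq \cell(\U_B)$ forces $\S_A' \subseteq \cell(\U_B)$. All the content lies in the converse.

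For the converse, the key structural fact I would establish first is that $\M_A$ is the convex hull of $\S_A$ together with the two operators $O$ and $\II^A$, i.e. $\M_A = \conv(\{O, \II^A\} \cup \S_A)$. Writing $M = \frac{1}{2}(x_0 \II^A + \vec{x}\cdot\vec{\sigma}^A)$ with $\vec{x}=(x_1,x_2,x_3)$ and $x_0 = \Tr(M)$, the constraint $O \le M \le \II^A$ says that the eigenvalues $\frac{1}{2}(x_0 \pm \abs{\vec{x}})$ lie in $[0,1]$, which is equivalent to $\abs{\vec{x}} \le \min(x_0, 2 - x_0)$. This describes a convex double cone whose two apexes are $O$ (at $x_0 = 0$) and $\II^A$ (at $x_0 = 2$) and whose cross-section at $x_0 = 1$ is exactly the Bloch ball $\S_A$. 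Convexity follows from the concavity of $x_0 \mapsto \min(x_0, 2-x_0)$, and the extreme points are seen to be $O$, $\II^A$, and the equatorial sphere $\partial(\S_A) \subseteq \S_A$, which yields the claimed convex-hull description.

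Because the EPR map is linear, it commutes with taking convex hulls, so $\M_A' = \conv(\{O, (\II^A)'\} \cup \S_A')$, using that a linear map fixes $O$ and sends $\II^A$ to Bob's reduced state $(\II^A)' = \Tr_A \rho$. Now $\cell(\U_B)$ is convex, being the image of the cube $[0,1]^m$ under $\vec{\beta} \mapsto \sum_{i} \beta_i B_i$, so $\M_A' \subseteq \cell(\U_B)$ holds precisely when each of $\S_A'$, $O$, and $(\II^A)'$ lies in $\cell(\U_B)$. But $O \in \cell(\U_B)$ trivially (all $\beta_i = 0$), and $(\II^A)' \in \cell(\U_B)$ because the standing requirement on any ansatz is $\sum_{i} B_i = (\II^A)'$, i.e. the principal vertex of the box sits at Bob's reduced state (all $\beta_i = 1$). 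Hence the containment of $\M_A'$ collapses to the single condition $\S_A' \subseteq \cell(\U_B)$, as desired.

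I do not expect a serious obstacle; the only step needing genuine care is the first one, the explicit identification of $\M_A$ with the double cone $\conv(\{O, \II^A\} \cup \S_A)$, since everything afterwards is just linearity of the EPR map and convexity of the box. In particular, it is essential to invoke the ansatz constraint $\sum_{i} B_i = (\II^A)'$, as this is exactly what places the apex image $(\II^A)'$ inside the box for free.
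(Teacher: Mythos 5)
Your proof is correct and follows essentially the same route as the paper's: identify $\M_A$ as the convex hull of $\S_A$ and the two apexes $O$, $\II^A$, push this through the linear EPR map, and use convexity of $\cell(\U_B)$ together with the facts that $O$ and the principal vertex $(\II^A)'=\sum_i B_i$ lie in the box. The only difference is that you spell out details the paper leaves implicit (the eigenvalue verification of the double-cone structure and the explicit membership of $O$ and $(\II^A)'$ in $\cell(\U_B)$), which is a matter of exposition rather than of approach.
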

\begin{proof}  
Since $\S_A' \subseteq \M_A'$, the forward direction is obvious. For the reverse direction, assume that $\S_A' \subseteq \cell (\U_B) \cap \P_A'$. As $\M_A$ is the convex hull~\cite{Rockafellar1970a} of $\S_A$ and the two vertices $O$, $\II^A$, and linear maps preserve convex combinations, $\M_A'$ is the convex hull of $\S_A'$ and two vertices $O$ and $(\II^A)'$. Since $\cell(\U_B)$ is a convex set containing $\S_A'$, $O$ and $(\II^A)'$, it also contains $\M_A'$.
\end{proof}

Since $\S_A' \subseteq \P_A'$, the following lemma is obvious.
\begin{lemma}
$\S_A'$ is contained in $\cell (\U_B)$ if and only if $\S_A' \subseteq \cell (\U_B) \cap \P_A'$. 
\end{lemma}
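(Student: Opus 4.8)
The plan is to reduce the entire statement to the single observation flagged in the remark that precedes it, namely that $\S_A' \subseteq \P_A'$ always holds, after which both implications become purely set-theoretic. First I would establish this inclusion explicitly. By definition $\S_A = \P_A \cap \M_A$, so in particular $\S_A \subseteq \P_A$. Since Alice's EPR map $\rho^{A \to B}$ is a map and the prime denotes the pointwise image of a set under this map, inclusions are preserved: applying the map to $\S_A \subseteq \P_A$ yields $\S_A' \subseteq \P_A'$. Nothing beyond ``the image of a subset is contained in the image'' is needed at this step; in particular no use of convexity or of the linearity of the map is required here.

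Granting $\S_A' \subseteq \P_A'$, I would then treat the two directions separately. For the forward direction, suppose $\S_A' \subseteq \cell(\U_B)$. Combining this with the auxiliary inclusion $\S_A' \subseteq \P_A'$ gives $\S_A' \subseteq \cell(\U_B) \cap \P_A'$, since a set contained in each of two sets is contained in their intersection. For the reverse direction, suppose $\S_A' \subseteq \cell(\U_B) \cap \P_A'$; because an intersection is contained in each of its factors, $\cell(\U_B) \cap \P_A' \subseteq \cell(\U_B)$, and hence $\S_A' \subseteq \cell(\U_B)$ at once.

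I do not anticipate any genuine obstacle: all of the content sits in the auxiliary fact $\S_A' \subseteq \P_A'$, which is itself immediate from $\S_A \subseteq \P_A$ together with the monotonicity of taking images. The only point worth double-checking is the bookkeeping of what the prime means, that it is the image of the whole set $\S_A$ under Alice's EPR map rather than some restricted or hull-based construction, but this is already fixed by the conventions established earlier in the section. This is precisely why the statement is labelled ``obvious'' in the text, and my proof would simply make the two short inclusions explicit.
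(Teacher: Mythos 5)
Your proof is correct and follows exactly the paper's intended argument: the paper dismisses this lemma as ``obvious'' precisely because $\S_A' \subseteq \P_A'$ (which, as you note, follows from $\S_A \subseteq \P_A$ and monotonicity of images), after which both directions are trivial set-theoretic inclusions. Your write-up simply makes explicit what the paper leaves implicit.
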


Within  $\P_A'$, $\S_A'$ defines uniquely a metric $d_{\S_A'}$ so that under this metric, $\S_A'$ is precisely the unit ball with center at $\frac{(\II^A)'}{2}$. 
One then deduces that Alice's steering Bloch ball $\S_A'$ is contained in $\cell (\U_B) \cap \P_A'$ if and only if the distance from any point on the boundary $\partial [\cell(\U_B) \cap \P_A']$ to $\frac{(\II^A)'}{2}$ with respect to the metric $d_{\S_A'}$ is at least $1$~\footnote{Another way to see this is making a transformation $\varphi$ on $\P_A'$ to transform $\S_A'$ to a unit ball in the usual metric. Then $\varphi (\S_A')$ is contained in $\varphi [\cell (\U_B)]$ if and only if the usual distance from any point on the boundary of $\partial [\varphi(\cell(\U_B) \cap \P_A')]$ to $\varphi [\frac{(\II^A)'}{2}]$ is at least $1$.}. 
Thus, if we define the \emph{principal radius} of an ansatz $\U_B$ (characterized by a distribution $u$) by
\begin{equation}
r_u (\rho^{A \to B})= \min_{B \in \partial [\cell (\U_B) \cap \P_A']} d_{\S_A'}\left(B,\frac{(\II^A)'}{2} \right),
\label{eq:steering-functional}
\end{equation}
then Alice's steering outcomes $\M_A'$ is contained in $\cell(\U_B)$ if and only if the principal radius is at least $1$.

We define Alice's \emph{critical radius of local models}, or Alice's \emph{critical radius} for short, as
\begin{equation}
R(\rho^{A \to B}) = \sup_u r_u (\rho^{A \to B}).
\label{eq:steering-radius}
\end{equation}
From the above discussion, we obtain
\begin{proposition}\label{prop:steering-radius}
A state $\rho$ is unsteerable from Alice's side if and only if $R(\rho^{A \to B}) \ge 1$.
\end{proposition}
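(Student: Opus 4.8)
The plan is to assemble the chain of equivalences prepared in the preceding paragraphs and then to supply the one ingredient they leave implicit, namely that the supremum defining $R(\rho^{A\to B})$ is attained. First I would record the logical skeleton. By~\cite[Proposition 2]{Nguyen2016a}, $\rho$ is unsteerable from Alice's side if and only if $\M_A'$ is packable, that is, there is an ansatz $\U_B$ with principal vertex $(\II^A)'$ such that $\M_A'\subseteq\cell(\U_B)$. By Lemma~\ref{lem:equator-containing} combined with the lemma immediately following it, this holds if and only if $\S_A'\subseteq\cell(\U_B)\cap\P_A'$; and by the metric characterization of this containment together with the definition~\eqref{eq:steering-functional} of the principal radius, it holds if and only if $r_u(\rho^{A\to B})\ge 1$, where $u$ is the distribution representing $\U_B$. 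Thus the proposition reduces to the assertion that $\rho$ is unsteerable if and only if there exists a distribution $u$ with $r_u(\rho^{A\to B})\ge 1$.

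One direction is immediate: if such a $u$ exists, then $R(\rho^{A\to B})=\sup_u r_u(\rho^{A\to B})\ge r_u(\rho^{A\to B})\ge 1$. For the converse I must convert $R\ge 1$ back into the existence of an admissible $u$, and the only delicate case is $R=1$ with the supremum not a priori attained. If I can show the supremum is always attained, the converse follows at once, since an optimal $u^{\ast}$ then satisfies $r_{u^{\ast}}(\rho^{A\to B})=R(\rho^{A\to B})\ge 1$, whence $\S_A'\subseteq\cell(\U_B^{\ast})$ and $\rho$ is unsteerable.

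The hard part is therefore attainment, which I would establish by compactness. Realize each ansatz as a Borel probability measure $\mu$ (generated by $u$) on the compact Bloch sphere $\partial(\S_B)$, subject to the affine principal-vertex constraint $\int B(\n)\,\d\mu(\n)=(\II^A)'$; the set of such measures is closed and bounded, hence weak-$\ast$ compact by Banach--Alaoglu. It then suffices to show that $u\mapsto r_u(\rho^{A\to B})$ is upper semicontinuous, because an upper semicontinuous function on a compact set attains its maximum. For this I would use that $\cell(\U_B)$ is the zonoid generated by the vector measure $B(\n)\,\d\mu(\n)$, so that weak-$\ast$ convergence $\mu_n\to\mu$ forces Hausdorff convergence of the sections $\cell(\U_{B,n})\cap\P_A'\to\cell(\U_B)\cap\P_A'$ inside the fixed hyperplane $\P_A'$, all sharing the fixed interior center $\tfrac{(\II^A)'}{2}$. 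Since $r_u$ is the inradius of this section measured from that fixed center in the fixed metric $d_{\S_A'}$, and the inradius from a fixed interior point is continuous under Hausdorff convergence---if a $d_{\S_A'}$-ball of radius $r$ lies in every $\cell(\U_{B,n})\cap\P_A'$ then, passing to the closed limit, it lies in $\cell(\U_B)\cap\P_A'$---the function $r_u$ is upper semicontinuous and the supremum is attained.

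Two subtle points I would watch. An optimal ansatz may genuinely require a continuous distribution rather than a finite ensemble of local hidden states, which is precisely why the argument is run in the space of measures rather than over finite sets. And the metric $d_{\S_A'}$ degenerates when Alice's EPR map is not of full rank, i.e.\ when $\S_A'$ is lower dimensional; in that situation I would carry out the entire argument inside the affine span of $\S_A'$ within $\P_A'$, where $d_{\S_A'}$ is again a genuine norm and the inradius estimate above applies verbatim.
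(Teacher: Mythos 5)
Your argument is correct, and its skeleton is exactly the paper's: the paper ``proves'' Proposition~\ref{prop:steering-radius} simply by assembling the preceding discussion (unsteerable $\Leftrightarrow$ packable $\Leftrightarrow$ $\S_A'\subseteq\cell(\U_B)\cap\P_A'$ for some ansatz $\Leftrightarrow$ $r_u(\rho^{A\to B})\ge 1$ for some $u$), with no further justification. Where you genuinely go beyond the paper is the converse direction: the paper silently identifies ``$\sup_u r_u\ge 1$'' with ``$\exists u$ with $r_u\ge 1$'', which is automatic only when $R(\rho^{A\to B})>1$; the boundary case $R=1$ with a possibly unattained supremum is precisely the gap you isolate, and your compactness argument closes it. That argument is sound: the admissible ansätze form a weak-$\ast$ closed, hence compact, subset of the probability measures on the sphere (the principal-vertex constraint is finitely many integrals of continuous functions), and $u\mapsto r_u$ is upper semicontinuous because $r_u$ is the inradius of $\cell(\U_B)\cap\P_A'$ from the fixed center $\frac{(\II^A)'}{2}$ and ball containment passes to the limit. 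One caveat: your intermediate claim that weak-$\ast$ convergence forces Hausdorff convergence of the \emph{sections} $\cell(\U_{B,n})\cap\P_A'$ is stronger than needed and can fail when the limit box degenerates (sections of convex bodies are not Hausdorff-continuous in general); but the inference you actually use---a ball lying in every $\cell(\U_{B,n})\cap\P_A'$ lies in the limit section---needs only outer semicontinuity of the boxes themselves, which follows from pointwise convergence of their support functions, each being an integral of a continuous function of $\n$ against $\mu_n$. So the over-claim is harmless. Finally, your handling of degenerate EPR maps (restricting to the affine span of $\S_A'$) differs from the paper's, which sidesteps degeneracy by invoking that such states are separable and hence unsteerable; both routes work, yours being intrinsic and the paper's shorter but reliant on an external separability result. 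In short: same equivalence chain, but your proof supplies the attainment/semicontinuity step that the paper leaves implicit.
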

Therefore, the problem of determining steerability has boiled down to the maximization problem~(\ref{eq:steering-radius}) with respect to all possible distributions $u$. Note that the functional under the maximization itself is defined as a conditional minimization~(\ref{eq:steering-functional}). Since a state with degenerate EPR map is separable~\footnote{A degenerate EPR map gives rise to a degenerate steering ellipsoid; according to~\cite{Jevtic2014a,Milne2014a}, the corresponding state is separable.}, and thus unsteerable, we may assume that $\rho^{A \to B}$ is non-degenerate. In this case, $\P_A'$ is a 3D hyperplane and $\S_A'$ is a 3D ellipsoid, the problem~(\ref{eq:steering-functional}) can be considered as a minimization problem over $\partial \left [ \cell(\U_B) \right ]$ with constraint $\P_A'$.

\subsection*{Steerability of non-degenerate T-states}

In this subsection, we illustrate the calculation of Alice's critical radius for the so-called T-states, which are defined~\cite{Horodecki1996a} by
\begin{equation}
\rho= \frac{\II^A}{2} \otimes \frac{\II^B}{2} + \frac{1}{4}\sum_{i,j=1}^{3} T_{ij} \sigma^A_i \otimes \sigma^B_j,
\end{equation}
where $T$ is the $(3 \times 3)$ correlation matrix, which can always be assumed to be diagonal by choosing appropriate bases for $\H_A$ and $\H_B$~\cite{Jevtic2015a}. Again we consider only T-states with non-degenerate correlation matrices, or non-degenerate T-states.

For a T-state, Alice's EPR map
can be presented by $\frac{1}{2} \Theta^T = \frac{1}{2} \oplus \frac{1}{2}T$, where $\frac{1}{2}$ is the contraction along $X_0$ and $\frac{1}{2}T$ acts on the space $X_1X_2X_3$. 
Let $u$ denote a distribution characterizing an ansatz $\U_B$ for the base of a polyhedral box to pack the steering outcomes of T-states. In~\cite{Nguyen2016a}, we have shown that the boundary $\partial [\cell(\U_B)]$ consists of points $\begin{pmatrix} x_0(\lambda, \n_0, g) \\ \b (\lambda,\n_0,g) \end{pmatrix}$ of the form
\begin{equation}
\int \d \mu (\n) [1_{\n_0^{T} \n > \lambda} (\n) + g (\n) 1_{\n_0^{T} \n = \lambda} (\n)] 
\begin{pmatrix} 1 \\ \n \end{pmatrix},
\label{eq:boundary}
\end{equation}
where $\n_0 \in \S_B$, $\lambda \in [-1,1]$ and $g$ is an arbitrary function with values between $0$ and $1$; here $1_X$ denotes the indicator function of a set $X \subseteq \S_B$, $1_X (\n) = 1$ if $\n \in X$ and $1_X(\n) = 0$ everywhere else. For the sake of generality, we do not assume the continuity of $u$; however it is worth noting that if $u$ is continuous, the second contribution in~(\ref{eq:boundary}) vanishes and various statements, e.g., Lemmas~\ref{lem:equator} and~\ref{lem:tangent-principle}, are significantly simplified.

It is also important to note that due to the construction~\cite{Nguyen2016a}, $\n_0$ is a normal vector of a tangent plane at $\b(\lambda,\n_0,g)$ of the cross-section of $\partial [\cell(\U_B)]$ at $x_0$. This remains true even when the surface has many tangent planes at $\b(\lambda,\n_0,g)$ (for example, the surface is a polyhedron and $\b(\lambda,\n_0,g)$ is on one of its edges).

For non-degenerate T-states, $\P_A'$ is the 3D hyperplane $X_0 = \frac{1}{2}$. To constrain the boundary~(\ref{eq:boundary}) to this hyperplane is equivalent to requiring $\lambda$, $\n_0$ and $g$ to satisfy 
\begin{equation}
\int \d \mu (\n) [1_{\n_0^T \n > \lambda} (\n) + g (\n) 1_{\n_0^T \n=\lambda} (\n)] = \frac{1}{2}.
\label{eq:T-state-constraint}
\end{equation} 
Now since $\S_A'$ is central symmetric, for T-states one can also choose $u$ to be central symmetric~\footnote{Since T-states are $Z_2$-invariant, according to~\cite{Wiseman2007a}, the hidden state model can be chosen to be $Z_2$-invariant.}. We first claim
%



\setcounter{lemma}{2}
\begin{lemma} 
\label{lem:equator}
For a central symmetric ansatz $u$, if $\b(\lambda,\n_0,g)$ is a point on the boundary of $\cell(\U_B)$ at the cross section $X_0 = \frac{1}{2}$, then $\b(0,\n_0,g) = \b(\lambda,\n_0,g)$. In other words, we can take $\lambda = 0$ as a solution to the constraint~\eqref{eq:T-state-constraint}.
\end{lemma}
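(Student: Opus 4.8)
The plan is to exploit the central symmetry of $\mu$ to pin down the mass of each hemisphere, and then to show that the cross-section constraint~\eqref{eq:T-state-constraint} leaves no room between the threshold $\lambda$ and the equator $\n_0^T\n = 0$. First I would record the single consequence of central symmetry that drives the argument: the map $\n \mapsto -\n$ is a $\mu$-preserving bijection that carries the open cap $\{\n_0^T\n > 0\}$ onto $\{\n_0^T\n < 0\}$, so these have equal mass and
\begin{equation}
\mu\!\left(\{\n_0^T\n > 0\}\right) = \tfrac{1}{2}\left(1 - \mu\!\left(\{\n_0^T\n = 0\}\right)\right),
\label{eq:hemisphere}
\end{equation}
using $\int \d\mu = 1$. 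This fixes the $x_0$-content of the open hemisphere in terms of the equator's mass alone.

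Next I would treat $\lambda > 0$ (the case $\lambda < 0$ being entirely analogous, with the cap now larger than the hemisphere and the band lying on the far side of the equator; $\lambda = 0$ is trivial). Decompose the open hemisphere $\{\n_0^T\n > 0\}$ into the disjoint pieces $\{\n_0^T\n > \lambda\}$, the circle $\{\n_0^T\n = \lambda\}$, and the open band $\{0 < \n_0^T\n < \lambda\}$. Computing $\mu(\{\n_0^T\n>0\})$ in two ways --- once from this decomposition together with the constraint~\eqref{eq:T-state-constraint}, which pins the $g$-weighted cap mass to $\tfrac{1}{2}$, and once from~\eqref{eq:hemisphere} --- produces an identity whose left-hand side equals $-\tfrac{1}{2}\mu(\{\n_0^T\n=0\}) \le 0$ while its right-hand side is a sum of manifestly nonnegative terms, namely the band mass and the integral of $(1-g)$ over the circle $\{\n_0^T\n=\lambda\}$. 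Forcing both sides to vanish yields three facts at once: the equator has zero mass, the band $\{0<\n_0^T\n<\lambda\}$ has zero mass, and $g = 1$ holds $\mu$-almost everywhere on $\{\n_0^T\n=\lambda\}$.

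Finally I would assemble these into the claimed equality of vectors. Because the equator carries no mass, the boundary term in $\b(0,\n_0,g)$ disappears and $\b(0,\n_0,g) = \int_{\n_0^T\n>0}\n\,\d\mu$; because $g=1$ almost everywhere on $\{\n_0^T\n=\lambda\}$, the fractional boundary contribution to $\b(\lambda,\n_0,g)$ becomes the full circle, so $\b(\lambda,\n_0,g) = \int_{\n_0^T\n\ge\lambda}\n\,\d\mu$; and since the intervening band has zero mass this integral agrees with the one over the full hemisphere. Hence $\b(0,\n_0,g) = \b(\lambda,\n_0,g)$. The step I expect to be the main obstacle is exactly the bookkeeping of the boundary circle and the weight $g$ in the non-continuous setting: unlike the continuous case, where equators are null and the $g$-terms simply drop, here one must carry the equator mass and the circle integral through the computation simultaneously, and it is the sign dichotomy in the measure identity above that forces all of these awkward terms to collapse to zero.
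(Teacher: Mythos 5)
Your proof is correct and follows essentially the same route as the paper's: central symmetry plus the constraint~\eqref{eq:T-state-constraint} force the equator mass, the band $\{0 < \n_0^T\n < \lambda\}$, and the integral of $(1-g)$ over the circle $\{\n_0^T\n = \lambda\}$ to vanish simultaneously, after which $\b(0,\n_0,g)$ and $\b(\lambda,\n_0,g)$ coincide. The only difference is bookkeeping --- the paper decomposes the total mass $1$ and uses the reflection $\n \mapsto -\n$ to equate the two $g$-weighted cap terms, whereas you compute the open-hemisphere mass in two ways --- but the underlying nonnegativity/sign-dichotomy argument is identical.
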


\begin{proof}
The lemma is obvious if $\lambda=0$. Suppose for $\lambda > 0$, one has $\int \d \mu (\n) [1_{\n_0^T \n > \lambda}+ g(\n) 1_{\n_0^T \n = \lambda}]=\frac{1}{2}$. Consider the decomposition of Bob's Bloch sphere to $\n_0^T \n > \lambda$, $\n_0^T \n < - \lambda$ and the boundary $\n_0^T \n = \pm \lambda $, $1 = \int \d \mu (\n) [1_{\n_0^T \n > \lambda}+ g(\n) 1_{\n_0^T \n = \lambda}] + \int \d \mu (\n) [1_{\n_0^T \n < -\lambda} + g (-\n) 1_{\n_0^T \n = -\lambda}] + \int \d \mu (\n) 1_{ -\lambda <\n_0^T \n < \lambda } + \int \d \mu(\n) [1-g(\n)] 1_{\n_0^T \n = \lambda} + \int \d \mu(\n) [1-g(-\n)] 1_{\n_0^T \n = -\lambda}$. Because of the central symmetry of $\mu$, the first two terms are equal, and are $\frac{1}{2}$ by assumption. The last three terms are all non-negative, thus each term should vanish. It also follows that a part of their sum is also zero, $\int \d \mu (\n) [1_{0 < \n_0^T \n < \lambda} + [1-g(\n)] 1_{\n_0^T \n =\lambda} + g(\n) 1_{\n_0^T \n =0}] = 0$. Therefore $\norm{\b (0,\n_0,g)- \b(\lambda,\n_0,g)} \le  \int \d \mu (\n) [1_{0 < \n_0^T \n < \lambda} + [1-g(\n)] 1_{\n_0^T \n =\lambda} + g(\n) 1_{\n_0^T \n =0}] \norm{\n} = 0$, thus $\b (0,\n_0,g)=\b(\lambda,\n_0,g)$. The proof is similar for $\lambda<0$. 
\end{proof}

In the following, we will always assume $\lambda = 0$. The surface $\partial[\cell(\U_B) \cap \P_A']$ in the 3D hyperplane $\P_A'$ is given by 
\begin{equation}
\b (\n_0,g) = \int \d \mu (\n) [1_{\n_0^T \n > 0} (\n) + g (\n) 1_{\n_0^T \n= 0} (\n)] \n,
\label{eq:T-state-boundary}
\end{equation} 
where $\b(\n_0,g)$ denotes the function $\b(0,\n_0,g)$ in~(\ref{eq:boundary}) for simplicity. The constraint~(\ref{eq:T-state-constraint}) is interpreted as a constraint on $g$, $\int \d \mu (\n) [1_{\n_0^T \n > 0} (\n) + g (\n) 1_{\n_0^T \n= 0} (\n)] = \frac{1}{2}$.

The following lemma allows us to compute the principal radius $r_u (\rho^{A \to B})$ using the parametrization~(\ref{eq:T-state-boundary}).
\begin{lemma}
\label{lem:tangent-principle}
The principal radius of an ansatz $u$ with respect to the EPR map $\rho^{A \to B}$ of a T-state can be found by
\begin{equation}
r_u (\rho^{A \to B}) = \min_{\n_0}  \frac{ 2 \n_0^{T} \b (\n_0,g)}{ \sqrt{\n_0^T T^2 \n_0}},
\label{eq:linear-functional}
\end{equation}
where $\b(\n_0,g)$ is defined by equation~(\ref{eq:T-state-boundary}). Note that the expression on the right side of~(\ref{eq:linear-functional}) does not depend on $g$. 
\end{lemma}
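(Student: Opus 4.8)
The plan is to recognize the principal radius $r_u(\rho^{A\to B})$, defined in~\eqref{eq:steering-functional} as the minimal $d_{\S_A'}$-distance from the center $\tfrac{(\II^A)'}{2}$ to the boundary surface $\partial[\cell(\U_B)\cap\P_A']$, as the inradius of the convex body $K := \cell(\U_B)\cap\P_A'$ measured in the norm whose unit ball is $\S_A'$, and then to evaluate that inradius through support functions.

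First I would fix coordinates in the $3$D hyperplane $\P_A'$ by using the displacement from the center $\tfrac{(\II^A)'}{2}$; in these coordinates the constraint~\eqref{eq:T-state-constraint} places the boundary points at the vectors $\b(\n_0,g)$ of~\eqref{eq:T-state-boundary}, while $\S_A'$ is the ellipsoid $\tfrac{1}{2}T(\text{unit ball})$ centered at the origin. Thus $d_{\S_A'}$ is exactly the norm $\norm{\cdot}_{\S_A'}$ whose unit ball is this ellipsoid, and $r_u(\rho^{A\to B}) = \min_{\b\in\partial K}\norm{\b}_{\S_A'}$. Since $\cell(\U_B)$ is a convex polyhedral box and $\P_A'$ is a hyperplane, $K$ is convex and contains the center $\tfrac{(\II^A)'}{2}$ in its interior (the principal-vertex condition $\sum_i B_i = (\II^A)'$ forces this center to be the origin of the displacement coordinates). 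A short argument then shows that this minimal distance coincides with the largest $r$ with $r\,\S_A'\subseteq K$, i.e.\ the inradius of $K$ in the metric $d_{\S_A'}$.

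Next I would invoke the support-function description of the inradius: for a convex body $K$ with the origin in its interior and a centrally symmetric gauge body $M$, one has $rM\subseteq K$ iff $r\,h_M(\n_0)\le h_K(\n_0)$ for every unit normal $\n_0$, where $h_K(\n_0)=\sup_{x\in K}\n_0^T x$; hence the inradius is $r = \min_{\n_0} h_K(\n_0)/h_M(\n_0)$. The crucial input here is the fact, recalled after~\eqref{eq:boundary}, that $\n_0$ is a normal of a tangent plane of $K$ at $\b(\n_0,g)$; this identifies the supporting hyperplane of $K$ in direction $\n_0$ as the one through $\b(\n_0,g)$, so that $h_K(\n_0) = \n_0^T\b(\n_0,g)$. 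A direct computation gives the support function of the ellipsoid $M=\S_A'$, namely $h_{\S_A'}(\n_0) = \max_{\norm{v}\le1}\n_0^T(\tfrac{1}{2}Tv) = \tfrac{1}{2}\norm{T\n_0} = \tfrac{1}{2}\sqrt{\n_0^T T^2\n_0}$, using that $T$ is symmetric. Substituting both into the support-function formula yields~\eqref{eq:linear-functional}. The independence of the value from $g$ is then immediate: varying $g$ only slides $\b(\n_0,g)$ along the face of $K$ contained in the supporting hyperplane with normal $\n_0$, leaving $h_K(\n_0)=\n_0^T\b(\n_0,g)$ unchanged.

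The main obstacle I anticipate is the rigorous passage from the ``minimal distance to the boundary'' to the support-function ratio in the generality allowed here, where $u$ need not be continuous and $K$ may have flat faces and edges, so that $\n_0\mapsto\b(\n_0,g)$ is genuinely multivalued and supporting hyperplanes must be handled through the normal-cone structure rather than a smooth Gauss map. I would address this by working directly with supporting half-spaces: establish $K=\bigcap_{\n_0}\{x:\n_0^T x\le\n_0^T\b(\n_0,g)\}$ from the parametrization~\eqref{eq:T-state-boundary}, and verify that the extremal direction in the inradius minimization is attained, so that the $\min$ rather than an $\inf$ in~\eqref{eq:linear-functional} is justified. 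Once the half-space representation and the two support functions are in hand, the remaining algebra is routine.
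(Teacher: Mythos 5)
Your proof is correct, and it reaches the formula by a genuinely different technical route than the paper. The paper first applies the linear map $2T^{-1}$ to all of $\P_A'$, which turns the gauge ellipsoid $\S_A'$ into the Euclidean unit ball, so that $r_u(\rho^{A\to B}) = \min_{\a}\norm{\a}$ over the transformed boundary; it then proves the identity $\min_{\a}\norm{\a} = \min_{\a}(\n_{\a}^T\a)$ by an elementary ray argument (comparing the intersections $\a_1,\a_2$ of the ray $\n_{\a}$ with the boundary and with the tangent plane at $\a$, giving $\norm{\a_1}\le\norm{\a_2}=\n_{\a}^T\a\le\norm{\a}$), and finally transforms the normals covariantly, $\n_{\a}=T\n_0/\sqrt{\n_0^T T^2\n_0}$ at $\a = 2T^{-1}\b(\n_0,g)$, to land on~\eqref{eq:linear-functional}. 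You instead stay in the original coordinates of $\P_A'$ and compute the inradius of $K=\cell(\U_B)\cap\P_A'$ with respect to the gauge body $\S_A'$ via support functions, using $r\,M\subseteq K\iff r\,h_M\le h_K$, the identification $h_K(\n_0)=\n_0^T\b(\n_0,g)$, and $h_{\S_A'}(\n_0)=\tfrac{1}{2}\sqrt{\n_0^T T^2\n_0}$. Both arguments hinge on the same crucial input, recalled after~\eqref{eq:boundary}, that $\n_0$ is a normal of a supporting plane of $K$ at $\b(\n_0,g)$, and the paper's ray identity is exactly the Euclidean special case of your support-function characterization of the inradius; in that sense the underlying geometry is the same. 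What the two organizations buy is different: the paper's version is self-contained and needs only the two-line convexity argument, while yours avoids the change of coordinates and the covariant transformation of normals altogether, handles the non-smooth case (flat faces, edges, multivalued $\b(\n_0,g)$) systematically through supporting half-spaces, indexes the minimization by directions $\n_0$ from the start (which is precisely the form of~\eqref{eq:linear-functional}, whereas the paper's $\min_{\a}(\n_{\a}^T\a)$ ranges over boundary points with a chosen normal each), and makes the $g$-independence transparent, since varying $g$ only moves $\b(\n_0,g)$ within the face lying in the supporting plane orthogonal to $\n_0$. The attainment of the minimum, which you flag as needing verification, is immediate in your formulation: $h_K$ is continuous, $h_{\S_A'}$ is continuous and strictly positive (for non-degenerate $T$), and $\n_0$ ranges over a compact sphere. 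The only point to make explicit in a polished write-up is that the center $\frac{(\II^A)'}{2}$ indeed lies in $K$ (it is the midpoint of $O$ and $(\II^A)'$, both in $\cell(\U_B)$, and lies on $\P_A'$), which you use when identifying the minimal boundary distance with the inradius; the paper glosses over the same point.
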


\begin{figure}[h!]
\includegraphics[width=0.4\textwidth]{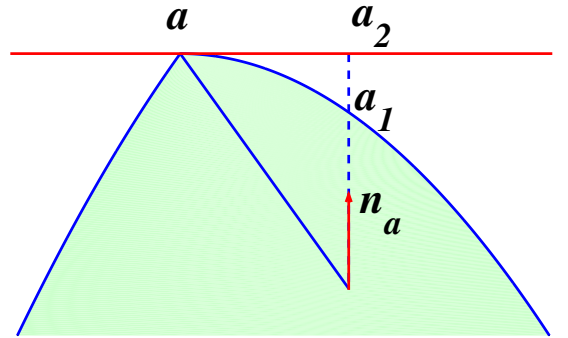}
\caption{Two-dimensional illustrations of $\a$, $\n_a$, $\a_1$ and $\a_2$. Note that the point $\a$ was chosen to have multiple tangent planes.}
\label{fig:minr}
\end{figure}

\begin{proof}
Note that the $2T^{-1}$ transformation brings $\S_A'$ to $\S_A$, and thus brings the distance $d_{\S_A'}$ to the usual one. Let us apply the $2T^{-1}$ transformation to $\cell(\U_B)$ and let $\a \in 2T^{-1} \partial \left [ \cell(\U_B) \cap \P_A' \right ]$, then $r_u (\rho^{A \to B})= \min_{\a} \norm{\a}$. Let $\n_{\a}$ be the normal vector of one of the planes that are tangent to $ 2T^{-1} \partial \left [ \cell(\U_B) \cap \P_A' \right ]$ at $\a$, we then show that
\begin{equation}
r_u (\rho^{A \to B}) = \min_{\a} (\n_{\a}^{T} \a). 
\label{eq:normal-principle}
\end{equation}
Indeed, let $\a_1$ and $\a_2$ be the intersection points of the ray $\n_{\a}$ with $2 T^{-1} \partial [\cell(\U_B) \cap \P_A']$ and the tangent plane at $\a$ respectively; see Figure~\ref{fig:minr}. Since $2T^{-1} \partial [\cell(\U_B) \cap \P_A']$ is the boundary of a convex domain, $\norm{\a_1} \le \norm{\a_2} = \n_{\a}^{T} \a \le \norm{\a}$. The first inequality implies $ r_u (\rho^{A \to B}) \le  \min_{\a} (\n_{\a}^{T} \a)$, while the second inequality implies $\min_{\a} (\n_{\a}^{T} \a) \le r_u (\rho^{A \to B})$; thus $r_u (\rho^{A \to B})= \min_{\a} (\n_{\a}^{T} \a)$. 

As we noted, the point $\b(\n_0,g)$ on $\partial [\cell(\U_B) \cap \P_A']$ admits $\n_0$ as the normal vector of one of the tangent planes at $\b(\n_0,g)$. As normal vectors transform covariantly under the $2T^{-1}$ transformation, we deduce that the normal vector of one of the tangent planes of $2T^{-1} \partial \left [ \cell(\U_B) \cap \P_A' \right ]$ at $\a= 2 T^{-1} \b(\n_0,g)$ is $\n_a = T \n_0/\sqrt{\n_0^T T^2 \n_0}$. Substituting $\a$ and $\n_{\a}$ into~\eqref{eq:normal-principle}, the dependence on $g$ is dropped out and one obtains~(\ref{eq:linear-functional}).
\end{proof}


Now Jevtic \emph{et al.}~\cite{Jevtic2015a} have shown that if one chooses $u=J$ with
\begin{equation}
J (\n)= \frac{N_T}{(\n^{T} T^{-2} \n)^2},
\label{eq:Jevtic}
\end{equation}
where $N_T$ is the normalization factor, which can be calculated explicitly~\cite{Jevtic2015a}, the surface $\b(\n_0)$ is analytically tractable,
\begin{equation}
\b (\n_0,g)= \pi N_T \abs{  \det (T)}  \frac{T^2 \n_0}{\sqrt{\n_0^T T^2 \n_0}}, 
\end{equation}
which is independent of $g$ since the ansatz is continuous.
This gives
\begin{equation}
\frac{\n_0^{T} \b (\n_0,g)}{\sqrt{\n_0^T T^2 \n_0}} = \pi N_T \abs{ \det (T)},
\end{equation}
and thus~(\ref{eq:normal-principle}) implies
\begin{equation}
r_J (\rho^{A \to B})= 2 \pi N_T \abs{\det (T)}.
\end{equation}
We are going to show that the ansatz (\ref{eq:Jevtic}) is in fact optimal in the sense that it maximizes $r_u (\rho^{A \to B})$ with respect to $u$. To this end, we show that $r_{J + v} (\rho^{A \to B}) \le r_J (\rho^{A \to B})= 2 \pi N_T \abs {\det (T)}$ for all $v$ central symmetric and normalized by $\int \d S (\n) v (\n)=0$, where $\d S (\n)$ is the area measure on Bob's Bloch sphere. Indeed, substituting $u=J+v$ into~\eqref{eq:linear-functional} (and choosing $g=1$ to simplify the notation), we find
\begin{align}
r_{J+v} (\rho^{A \to B}) &= 2 \pi N_T\abs{ \det (T) }+ \min_{\n_0} I(\n_0),
\label{eq:Jevtic-principal-radius}
\end{align}
with 
\begin{align}
I(\n_0) = \int \d S( \n) v (\n) 1_{\n_0^T \n \ge 0} \frac{2\n_0^T \n}{\sqrt{\n_0^T T^2 \n_0}}.
\end{align}
If one chooses $f(\n_0)= \frac{1}{2}\sqrt{\n_0 T^2 \n_0}$ then one finds that $\int \d S (\n_0) f(\n_0) I(\n_0)=0$ because it is equal to $\int \d S(\n) v (\n) \int \d S (\n_0) 1_{\n_0^T \n \ge 0} \n_0^T \n$, which vanishes because $\int \d S (\n_0) 1_{\n_0^T \n \ge 0} \n_0^T \n$ does not depend on $\n$ and
$ \int \d S(\n) v (\n)= 0$. Since $f(\n_0) > 0$, it is then clear that $\min_{\n_0} I(\n_0) \le 0$. Together with~(\ref{eq:Jevtic-principal-radius}), this implies $r_{J + v} (\rho^{A \to B}) \le 2\pi N_T \abs{\det (T)}$. By definition, the critical radius for a T-state is $R(\rho^{A \to B})= 2 \pi N_T \abs{\det (T)}$. Therefore a T-state is unsteerable if and only if $2 \pi N_T \abs{\det (T)} \ge 1$ as conjectured by Jevtic~\textit{et al.}~\cite{Jevtic2014a}. 

Since Werner states~\cite{Werner1989a} are special T-states, using this result one can easily recover the known necessary and sufficient condition for Werner states to be steerable~\cite{Wiseman2007a}. Moreover, it is also worth noting that if the center of Bob's steering ellipsoid (which is refered to as \emph{Alice's} steering ellipsoid in~\cite{Jevtic2014a}) is at the center of Alice's Bloch sphere, then the state can be brought into a T-state by a transformation without changing its steerability from Alice's side~\cite[Lemma 1]{Bowles2016a}; thus its steerability from her side can also be fully characterized.   

\section{Conclusion}
We have defined the critical radius of local models and shown that a state over a two-qubit system is steerable from Alice's side if and only if her critical radius of local models is less than $1$. We explicitly calculate the critical radius of local models for a T-state, thereby proving the conjecture on the sufficient and necessary condition for steerability of T-states. In our opinion, an important question for future development is to formulate a calculation of the critical radius for an arbitrary two-qubit state, either analytically or numerically. 

%
\begin{acknowledgements}
Discussions with Michael Hall, Sania Jevtic, Xuan Thanh Le, Antony Milne have been very helpful. In particular, Sania pointed out to us the separability of states with degenerate EPR maps. Antony and Michael drew our attention to the result on the steerablity of states that have the center of Bob's steering ellipsoid at the center of Alice's Bloch sphere. Comments from Sania and Antony have also significantly improved the readability of our manuscript. 
\end{acknowledgements}
\bibliography{../bibtex/quantum-steering}
\end{document}